\newcommand{\n}[1]{\ensuremath{\mathit{#1}}}
\newcommand{\fofalse}{{\ensuremath{{\mathbf{false}}}}}
\newcommand{\foand}{\ensuremath{\land}}
\newcommand{\foor}{\ensuremath{\lor}}
\newcommand{\unary}[1]{\texttt{<}{#1}\texttt{>}}
\newcommand{\scs}{\mathcal{S}}
\newcommand{\sa}{{\mathcal{S}_{\n{A}}}}
\newcommand{\mc}[1]{\mathcal{#1}}
\newcommand{\eqih}{{\buildrel\rm i.h.\over=}}
\newcommand{\vgt}[1]{\n{vGT}(#1)}
\newcommand{\vgts}[1]{\n{vGT}(#1)_{\scs_A}}
\newcommand{\fgt}[2]{\n{fGT}(#1, #2)_{\scs_A}}
\DeclareMathOperator*{\dotin}{\dot{\in}}
\DeclareMathOperator*{\dotsubset}{\dot{\subseteq}}
\DeclareMathOperator*{\ndotsubset}{\dot{\nsubseteq}}
\newtheorem*{rep@theorem}{\rep@title}
\newcommand{\newreptheorem}[2]{%
\newenvironment{rep#1}[1]{%
 \def\rep@title{#2 \ref{##1}}%
 \begin{rep@theorem}}%
 {\end{rep@theorem}}}
\newtheorem{example}{Example}
\newtheorem{definition}{Definition}
\newtheorem{proposition}{Proposition}
\newtheorem{corollary}{Corollary}
\newtheorem{lemma}{Lemma}
\newtheorem{theorem}{Theorem}
\begin{document}

%
\title{Reducing the Complexity of Quantified Formulas via Variable Elimination}


\titlerunning{}

%
\author{
 Aboubakr Achraf El Ghazi 
\and
 Mattias Ulbrich
\and
 Mana Taghdiri 
 \and 
 Mihai Herda
}

\institute{
 Karlsruhe Institute of Technology, Germany \\
 \email{$\{$elghazi, ulbrich, mana.taghdiri$\}$@kit.edu, mihai.herda@student.kit.edu} 
}


\authorrunning{A. A. El Ghazi and M. Taghdiri}

\clearpage

\maketitle

\begin{abstract}
We present a general simplification of quantified SMT formulas using variable elimination. 
The simplification is based on an analysis of the ground terms occurring as arguments in function applications.
We use this information to generate a system of set constraints, which is then solved to compute a set of {\em sufficient ground terms} for each variable. 
Universally quantified variables with a finite set of sufficient ground terms can be eliminated by instantiating them with the computed ground terms. 
The resulting SMT formula contains potentially fewer quantifiers and thus is potentially easier to solve. 
We describe how a satisfying model of the resulting formula can be modified to satisfy the original formula. 
Our experiments  show that in many cases, this simplification considerably improves the solving time, and our evaluations using Z3 \cite{z3} and CVC4 \cite{cvc4} indicate that the idea is not specific to a particular solver, but can be applied in general.
\end{abstract}


%
%

\pagestyle{empty}

\section{Introduction}
\label{sect:introduction}

Determining the satisfiability of first-order formulas with respect to theories is of central importance for system specification and verification. 
Current Satisfiability Modulo Theories (SMT) solvers have made significant progress in handling this problem efficiently.
SMT solvers such as CVC4 \cite{cvc4}, Yices1 \cite{yices1}, and Z3 \cite{z3} 
successfully address formulas containing quantifiers. 
They solve quantified formulas using heuristic quantifier instantiation based on the E-matching instantiation algorithm which was first introduced by Simplify \cite{simplify}. 
Although E-matching, because of its heuristic nature, is not complete, not even refutationally, it is best suited for integration into the DPLL(T) framework. 
Some techniques (e.g. \cite{e-matching_free-var, mbqi}) have extended E-matching in order to make it complete for some fragments of first-order logic. 

In spite of all the advances, the presence of quantifiers still poses a challenge to the solvers. 
In this paper, we propose a simplification of quantified SMT formulas that can be applied as a pre-process before calling an SMT solver. Given a (skolemized) SMT formula $A$, our simplification returns an equisatisfiable SMT formula $A'$ with potentially fewer universally quantified variables. Our simplification approach is syntactic in the sense that it extracts a set of set-valued constraints from the structure of $A$ whose solution is a set of {\em sufficient ground terms} for every variable. Those variables whose sets of sufficient ground terms are finite can be eliminated by instantiating them with the computed ground terms. If the resulting formula $A'$ is unsatisfiable, $A$ is guaranteed to be unsatisfiable too. However, if $A'$ has a model, it is not necessarily a model of $A$. We describe how any  model of $A'$ can be modified into a  model for $A$ without any significant overhead. This requires a special treatment of the interpreted functions. Our simplification procedure can also be applied if the 
logic of the input formula is not decidable; it can still reduce the number of quantifiers, thus simplifying the proof obligation.

Although our elimination process reduces the number of quantifiers, it may increase the number of occurrences of the remaining quantified variables (if any) (Appendix \ref{example} gives an example). Depending on the complexity of the involved terms, this may introduce additional overhead for the solver. Therefore, in order to apply our simplification as a general preprocessing step, it is important to balance the number of eliminated variables and the number of newly introduced variable occurrences. We define a metric that aims for estimating the cost of variable elimination, and allow the user to provide a threshold for the estimated cost. 

We have applied our simplification approach to 201 benchmarks from the SMT competition 2012 using CVC4 and Z3. The results indicate that in many cases, this simplification significantly improves the solving time, especially when a cost threshold is applied.
\vspace{-1mm}
\section{Background}
\label{sect:background}

This section provides a background on the first-order logic (FOL) (see \cite{logic-for-cs} for more details).
{\em Terms} are constructed from variables in $Var$, predicate symbols in
$P$ and function symbols in $F$\footnote{We distinguish between functions and predicates only when needed.}. 
Predicate and function symbols are given an arity by $\alpha: F \cup P \to \mathbb{N}$.
Function symbols with arity 0 are called constants and are denoted as $Con \subseteq
F$.  The set $Term$ of terms and the set $For$ of formulas are
defined inductively as usual. Terms without variables are called
\emph{ground terms} and denoted as $Gr \subseteq Term$. The set
$Gr(t)$ denotes all the ground terms occurring as subterms in a term $t$. We write $t[x_{1:n}]$ to denote that the variables $x_1, \ldots, x_n$ (for short $x_{1:n}$) occur in a
term $t$.
For an expression $t \in Term \cup For$, a variable $x$ and a ground term $gt$, 
the expression $t[gt/x]$ substitutes $gt$ for all the occurrences of $x$ in $t$. 
We apply substitutions (aka. instantiations) also to finite sets $S$ of ground terms as $t[S/x] := \{ t[gt/x] \mid gt \in S\}$. 
The Herbrand universe $\mc{H}(A)$ of a formula $A$ is the set of all ground terms built from $A$. 
That is, all constants occurring in $A$, are in $\mc{H}(A)$, and 
for each function $f$ occurring in $A$ and $gt_1, \dots, gt_{\alpha(f)} \in \mc{H}(A)$, $f(gt_1, \dots, gt_{\alpha(f)}) \in \mc{H}(A)$.

A literal is an atomic formula or a negated atomic formula. 
A clause is a disjunction of literals.  
A formula is in \emph{clause normal form} (CNF) if it is a conjunction $(C_1 \foand \dots \foand C_n)$ of
clauses where all $C_i$ are quantifier-free 
and all variables are implicitly universally quantified.  
We assume, unless stated otherwise, that all considered formulas are in CNF and all variables are unique. 
When required, we refer to clauses and CNFs as sets of literals and clauses, respectively.  

A semantical \emph{structure} (also called a \emph{model}) $\mc{M}$
is a tuple $(|M|, M)$, with a non-empty universe $|M|$, and a mapping
$M$ that defines an {\em interpretation} for every symbol in $F \cup P$, i.e.  
for $f \in F$, $M(f): |M|^{\alpha(f)} \rightarrow |M|$, and 
for $p \in P$, $M(p) \subseteq |M|^{\alpha(p)}$. 
Variables get their values from a variable assignment function $\beta: Var \rightarrow |M|$. 
The interpretation $(M,\beta)(t)$ of a term $t$ is defined inductively, and the interpretation of a set of terms $S$ is defined as
$(M,\beta)(S) = \{(M,\beta)(s) \mid s \in S\}$.
For a formula $A \in For$, we use ${\cal M} \models A$ if ${\cal M}$ is a satisfying model (or, for short, a model) of $A$, i.e. $A$ is true in ${\cal M}$. 
We use $\models A$ if $A$ is universally valid.

A \emph{theory} $\mc{T}$ is a deductively closed  set of formulas.  
A $\mc{T}$-model $\mc{M}$ is a model that satisfies all the formulas in $\mc{T}$.
A formula $A\in For$ is satisfiable modulo theory ${\cal T}$ if there 
exists a $\mc{T}$-model with
${\cal M} \models A$, for short $\mc{M} \models_{\mc{T}} A$.  
The function symbols that have their semantics (partially) fixed by 
${\cal T}$ are called {\em interpreted} and all others are \emph{uninterpreted}. 
If a term contains an interpreted function which is applied to a variable, we call it an {\em interpreted term}, otherwise, an {\em uninterpreted term}.
%
We denote variables by $x, y, \dots$; constants by $a, b, \dots$;  ground terms by $gt_i$; 
uninterpreted functions by $f, g, \dots$; interpreted functions by $op_i$; predicates  by $p, q, \dots$; 
terms by $s, t, \dots$; formulas by $A, B, \dots$;  values by $v_i$; and the considered SMT theory by $\mc{T}$. 


\vspace{-2mm}
\section{Example}

Figure \ref{examp2}(a) shows an SMT formula (as a set of implicitly conjoined subformulas) in which $c_1$ and $c_2$ represent constants, $f$ is a unary function, and $p$ is a binary predicate. 
Figure \ref{examp2}(b) shows the same formula after conversion to CNF: constants $c_3$ and $c_4$ denote the skolems for the formulas (3) and (4), respectively. 
Instead of solving the original formula (denoted by $A$), we produce an {\em instantiated formula} $A^{inst}$ in which the $x$ and $y$ variables are instantiated with certain ground terms.
$A^{inst}$ is given in Figure \ref{examp2}(c) where the numbers correspond to the lines in the CNF (and original) formula.  Formula $A^{inst}$ has fewer quantifiers than $A$ (in fact, it has zero quantifiers), and thus is easier to solve. 
We use $\n{vGT}(x)$ to represent the set of ground terms that is used to instantiate a variable $x$.
Variable $x$ (in Formula 2) refers to the first argument of $f$, and thus we instantiate it with all the ground terms that occur in that position, namely $\{c_1, c_4\}$. We call this the set of ground terms of $f$ for argument position 1, and denote it by $\n{fGT}(f, 1)$. Variable $y$ (in Formula 3), on the other hand, refers to both the first argument of $p$ and the first argument of $f$. Therefore, $\n{vGT}(y) = \n{fGT}(p, 1) \cup \n{fGT}(f, 1)$. In order to guarantee equisatisfiability of $A^{inst}$ and $A$, if two functions are applied to the same variable, they should be instantiated with the ground terms of both functions (see Section \ref{sect:sufGTSets}). Therefore, in this example, $\n{fGT}(p, 1) = \n{fGT}(f, 1) = \{c_1, c_4\}$ although $p$ is not directly applied to any constants.

\begin{figure}[t]
\begin{tabular}{c|c|c}
\begin{minipage}{4.9cm}
$(1)\ c_1 \neq c_2 \\
 (2)\ \forall x \mid f(x) = f(c_1) \\
 (3)\ \exists z \mid \forall y \mid \neg p(y, z) \foor f(y) = c_2 \\
 (4)\ \exists z \mid f(z) = c_1 
$\end{minipage}&

\begin{minipage}{4.3cm}
$(1)\ c_1 \neq c_2 \\
 (2)\ \forall x \mid f(x) = f(c_1) \\
 (3)\ \forall y \mid \neg p(y, c_3) \foor f(y) = c_2 \\
 (4)\ f(c_4) = c_1 
$\end{minipage}&

\begin{minipage}{3.8cm}
$(1)\ c_1 \neq c_2 \\
 (2)\ f(c_1) = f(c_1) \\
 (2)\ f(c_4) = f(c_1) \\
 (3)\ \neg p(c_1, c_3) \foor f(c_1) = c_2 \\
 (3)\ \neg p(c_4, c_3) \foor f(c_4) = c_2 \\
 (4)\ f(c_4) = c_1 
$\end{minipage}\\ 
& & \\
(a) & (b) & (c)\\
\end{tabular}

\begin{minipage}{13cm}
$\\ 
M(c_1) = 1, M(c_2) = 2, M(c_3) = 3, M(c_4) = 4\\
 M(f)(v) = 
\begin{cases}
1 & \n{if} \ v = 1 \\
1 & \n{if} \ v = 4 \\
any\ value & \n{else} 
\end{cases} 
\ \ \ \ \ \ \ \ \ M(p)(v, 3) = 
\begin{cases}
\fofalse & \n{if} \ v = 1 \\
\fofalse & \n{if} \ v = 4 \\
any\ value & \n{else} 
\end{cases}
$\end{minipage}\\ \\
\verb|                                      |(d)

\begin{minipage}{15cm}
$\\
M^\pi(c_1) = M(c_1) = 1, M^\pi(c_2) = M(c_2) = 2, M^\pi(c_3) = M(c_3) = 3, M^\pi(c_4) = M(c_4) = 4\\
M^{\pi}(f)(v) = 
\begin{cases}
M(f)(v)  & \n{if} \ v \in \{1, 4\} \\
M(f)(M(c_1)) & \n{else} 
\end{cases}\ \ \ \ \ \ \ \ \ \ \ \ \ \ \ \ \ = 1\ \ \ \  \n{for\ all\ v}\\
M^{\pi}(p)(v, c_3) = 
\begin{cases}
M(p)(v, M(c_3)) & \n{if} \ v \in \{1, 4\} \\
M(p)(M(c_1), M(c_3)) & \n{else} 
\end{cases}\ \ \ \  = \fofalse\ \ \ \  \n{for\ all\ v}
$\end{minipage}\\ \\
\verb|                                      |(e)
\caption{Example. (a) original SMT formula, (b) CNF formula, (c) instantiated formula, (d) a model for the instantiated formula, and (e) a model for the original formula.}
\label{examp2}
\end{figure}

The instantiated formula is an implication of the original formula. Hence, if $A^{inst}$ is unsatisfiable,  $A$ is also unsatisfiable. However, not every  model of $A^{inst}$ satisfies $A$. But the instantiation was chosen in such a way that we can modify the models of $A^{inst}$ to satisfy $A$. Figure \ref{examp2}(d) gives a sample  model $\mc{M}$ for $A^{inst}$ which does not satisfy $A$. Since in $A^{inst}$, $f$ is only applied to $c_1$ and $c_4$, and $p$ only to $(c_1, c_3)$ and $(c_4, c_3)$,  $\mc{M}$ may assign arbitrary values to $f$ and $p$ applied to other arguments. Although these values do not affect  satisfiability of $A^{inst}$, they affect  satisfiability of $A$. Therefore, we modify $\mc{M}$ to a model $\mc{M}^\pi$ by defining acceptable values for the function applications that do not occur in $A^{inst}$. Figure \ref{examp2}(e) gives the modified model $\mc{M}^\pi$ that our algorithm constructs. It is easy to show that this model  satisfies $A$.

The basic idea of modifying a model is to fix the values of the function applications that do not occur in $A^{inst}$ to some arbitrary value of a function application that does occur in $A^{inst}$.
This  works well for this example as $f$ and $g$ are uninterpreted symbols and thus their interpretations are not restricted beyond the input formula. 
Were they interpreted symbols, this would be different.  As an example, assume
that $p$  is the interpreted operator
``$\leq$''. In this case, the original formula $A_{\leq}$ becomes
unsatisfiable\footnote{(2) and (4) imply $f(c_1) = c_1$. $y \le z$ holds for some pair of integers, thus (3) implies $f(y) = c_2$ for some $y$. But $f(y) = f(c_1)$ by (2) and so $f(c_1) = c_2 = c_1$. This contradicts (1).}, 
but its instantiation ${A}_\leq^{\n{inst}}$ stays satisfiable\footnote{A model is $M'(c_1) = 1, M'(c_2) = 2, M'(c_3) = 0, M'(c_4) = 4, M'(f) \equiv 1$}. 
To guarantee the equisatisfiability in the presence of interpreted literals, we require the ground term sets to contain some terms that make the interpreted literals false. This makes the solver explore the cases where clauses become satisfiable regardless of the interpreted literals. 
In this example, the interpreted literal $\neg (y \le c_3)$ becomes false if $y$ is instantiated with the ground term $c_3-1$. Instantiating $A_{\le}$ with the ground terms $\{c_1, c_4, c_3-1\}$ reveals the unsatisfiability.


\section{Sufficient Ground Term Sets}
\label{sect:sufGTSets}


\begin{definition}
\label{def1}
Given a variable $x$ in an SMT formula $A$ (in CNF), a set of ground terms $S \subseteq \mc{H}(A)$
is  \emph{sufficient} for $x$ w.r.t a theory ${\cal T}$ if $A$ and $A[\n{S/x}]$ are equisatisfiable modulo ${\cal T}$.
\end{definition}

A variable $x$ in a formula $A$ can have more than one sufficient set of ground
terms. $\mc{H}(A)$ is always a sufficient set 
of ground terms as a result of the G\"odel-Herbrand-Skolem theorem which states 
that a formula $A$ in Skolem Normal Form (SNF) is satisfiable iff
$A[\mc{H}(A)/x]$ is satisfiable \cite{logic-for-cs}. But $\mc{H}(A)$ is usually infinite, and
our goal is to determine whether a \emph{finite} set of
sufficient ground terms exists, and to compute it if one exists.  This
computation is done by generating and solving a system of set
constraints over  sets of ground terms. 

Figure \ref{fig:sufGT-CRules} presents our (syntactic) rules 
to generate the set constraints for a formula $A$ in CNF.  
The notation $t \dotin C$ denotes that a term $t$ occurs 
as a subterm of a clause $C$.
We use $\scs_{\n{A}}$ to denote the set constraints system that results from applying these rules exhaustively to all the clauses of $A$. 
The constraints range over the sets $\vgt{x} \subseteq Gr$ for all variables $x$ in A. 
These sets denote the relevant instantiations for the respective variables. Auxiliary sets
$\n{fGT}(f,i) \subseteq Gr$ are introduced to denote the set of
relevant ground terms for an uninterpreted function $f \in F$ at an
argument position $i \in \mathbb{N}$.
We assume that the theory of integers is part of the considered $\mc{T}$, and that integers are included in the universe of every $\mc{T}$-model $\mc{M}$, i.e. $\mathbb{Z} \subseteq |M|$. 
The integer operators $<, \leq, +, -, \geq, >$ are fixed with their obvious meanings.

Rule $R_0$ of Figure \ref{fig:sufGT-CRules} guarantees that the set of relevant ground terms is not empty for any variable in $A$. 
Rule $R_1$ establishes a relationship between sets of ground terms for variables and function arguments.
Rule $R_2$ ensures that the ground terms that occur as arguments of a function $f$ are added to the corresponding ground term set of $f$.
Rule $R_3$ states that if a term $t[x_{1:n}]$ with variables $x_{1:n}$ occurs as the $i$-th argument of $f$, 
then all the instantiations of $t$ with the respective sets $\n{vGT}(x_i)$ must be in $\n{fGT}(f,i)$.
Rule $R_4$ states that our approach does not currently handle the case where a variable $x$ occurs as an argument of an unsupported interpreted function (supported operators are $\{=, <, \leq, >, \geq\}$), thus sets $\vgt{x}$ to infinity\footnote{In theory, this infinite set denotes $\mc{H}(A)$, but we use it as the ``unsupported" label that gets propagated to other relevant sets.} in order to be propagated to other relevant ground term sets.
Moreover, we do not handle the case where a supported interpreted operator has more than one variable argument (rule $R_5$). 
The remaining rules infer additional constraints for $\vgt{x}$ where $x$ occurs as an argument of a supported interpreted function. 
They constrain $\vgt{x}$ to contain at least one ground term that falsifies the corresponding (interpreted) literal.

\begin{figure}[tb]
\begin{math}
\inference[R$_0$:]{x \dotin C}{\vgt{x} \not = \emptyset} \hspace{1cm} 
\inference[R$_1$:]{f(\cdots, \overbrace{x}^{\text{i-th}}, \cdots) \dotin C}{\n{vGT(x)} = \n{fGT(f,i)}}  \hspace{1cm} 
\inference[R$_2$:]{f(\cdots, \overbrace{gt}^{\text{i-th}}, \cdots) \dotin C}{gt \in \n{fGT(f,i)}} \\ \vspace{.2cm}
\inference[R$_3$:]{f(\cdots, \overbrace{t[x_{1:n}]}^{\text{i-th}}, \cdots) \dotin C}{t[\n{vGT}(x_1)/x_1, \cdots, \n{vGT}(x_n)/x_n] \subseteq \n{fGT(f,i)}}\\ \vspace{.2cm}
\inference[R$_4$:]{op(\cdots, x, \cdots) \in C, \ op \not\in \{=, <, \leq, >, \geq\}}{\vgt{x} = \infty} \hspace{1.3cm}
\inference[R$_5$:]{op(x, y) \in C,  \ op \in \{=, <, \leq, >, \geq\}}{\n{vGT}(x) = \infty\ \ \n{vGT}(y) = \infty} \\ \vspace{.2cm}
\inference[R$_6$:]{(x \leq gt) \in C}{gt+1 \in \n{vGT(x)}} \hspace{0.4cm}
\inference[R$_7$:]{(x \geq gt) \in C}{gt-1 \in \n{vGT(x)}} \hspace{0.4cm}
\inference[R$_8$:]{\neg op(x, gt) \in C, \n{where} \ op \in \{\leq, \geq\} }{gt \in \n{vGT(x)}} \\ \vspace{.2cm}
\inference[R$_9$:]{\neg (x < gt) \in C}{gt-1 \in \n{vGT(x)}} \hspace{0.4cm} 
\inference[R$_{10}$:]{\neg (x > gt) \in C}{gt+1 \in \n{vGT(x)}}  \hspace{0.7cm} 
\inference[R$_{11}$:]{\n{op}(x, gt) \in C, \n{where \ \n{op} \in \{<, >\}}}{gt \in \n{vGT(x)}} \\ \vspace{.2cm}
\inference[R$_{12}$:]{\neg (x = gt) \in C}{gt \in \n{vGT(x)}} \hspace{0.9cm} 
\inference[R$_{13}$:]{(x = gt) \in C, x \in \mathbb{Z}}{\{gt-1, gt+1\} \subseteq \n{vGT(x)}} \hspace{0.7cm}
\inference[R$_{14}$:]{(x = gt) \in C, x \notin \mathbb{Z}}{\n{vGT(x)} = \infty}
\end{math}
\caption{The syntactic rules for generating the set constraints system ($\scs_A$).} 
\label{fig:sufGT-CRules}
\end{figure}

Let $vGT_{\scs_A}$ denote a collection of finite sets of ground terms which satisfies 
the constraints $\scs_A$.
We show that, if finite, $\n{vGT}(x)_{\scs_A}$ is a sufficient ground term set 
for $x$ in $A$. The variable $x$
can hence be eliminated by instantiating it with all the ground terms in
$\n{vGT}(x)_{\scs_A}$. The resulting formula $A[\n{vGT}(x)_{\scs_A}/x]$ is
equisatisfiable to $A$ and does not contain $x$ anymore.


\begin{theorem}[Main Theorem]
\label{theo:1}
Let $x$ be a variable in $A$ with $\n{vGT}(x)_{\scs_A} \neq \infty$, then $A$ and $A[\n{vGT(x)_{\scs_A}/x}]$ are equisatisfiable.
\end{theorem}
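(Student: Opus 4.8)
The plan is to establish the two directions of equisatisfiability separately, as is done implicitly in the example. The forward direction is immediate: since $x$ is (implicitly) universally quantified in the CNF formula $A$, every ground instance $A[gt/x]$ is a logical consequence of $A$, and $A[\vgts{x}/x]$ is nothing but the finite conjunction of such instances (clauses not mentioning $x$ remaining unchanged). Hence any $\mc{T}$-model of $A$ is already a $\mc{T}$-model of $A[\vgts{x}/x]$, so satisfiability of $A$ implies satisfiability of the instantiation. The whole content of the theorem lies in the converse.

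For the converse I would, starting from a $\mc{T}$-model $\mc{M}=(|M|,M)$ of $A[\vgts{x}/x]$, construct a $\mc{T}$-model $\mc{M}^\pi$ of $A$ by a projection that leaves all constants and all interpreted symbols untouched (so that $\mc{M}^\pi$ is automatically still a $\mc{T}$-model, since $\mc{T}$ constrains only interpreted symbols) and redefines the uninterpreted function and predicate symbols only on arguments that do not occur in the instantiation. Fixing a base assignment $\beta$, let $R=(M,\beta)(\vgts{x})$ be the finite set of values of the relevant ground terms, and choose a projection $\pi\colon|M|\to R$ that is the identity on $R$ and maps every other value to a suitable representative of $R$. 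I then set $M^\pi(f)(\dots,v_i,\dots)=M(f)(\dots,\pi(v_i),\dots)$ on precisely those argument positions $(f,i)$ that rule $R_1$ ties to $x$, i.e. with $\fgt{f}{i}=\vgts{x}$, and analogously for uninterpreted predicates.

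The core is a substitution lemma proved by induction on term structure: for every uninterpreted term $t$ occurring in $A$ and every assignment $\beta'$ one has $(M^\pi,\beta')(t)=(M,\hat\beta)(t)$, where $\hat\beta$ agrees with $\beta'$ except that $\hat\beta(x)=\pi(\beta'(x))$ (and likewise on any variable whose relevant set equals $\vgts{x}$). The inductive cases are exactly the situations covered by the generation rules: a bare variable at an argument position, where $\fgt{f}{i}=\vgts{x}$ makes the argument projection coincide with $\hat\beta$ ($R_1$); a ground subterm, whose value lies in $R$ and is therefore fixed by $\pi$ ($R_2$); and a compound subterm $s[x_{1:n}]$, whose value equals that of a relevant instantiation lying in $\fgt{f}{i}$ and is again fixed by $\pi$ ($R_3$). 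Crucially, because the hypothesis $\vgts{x}\neq\infty$ together with $R_4$, $R_5$ and $R_{14}$ forbids $x$ from occurring beneath an unsupported interpreted operator, the only interpreted literals containing $x$ use the operators $\{=,<,\le,>,\ge\}$ and can be treated at the literal level. With the lemma in hand I would verify $\mc{M}^\pi,\beta'\models C$ for every clause $C$ and assignment $\beta'$ by a case split: if some interpreted literal of $C$ is true under $\beta'$ it already satisfies $C$, since $\mc{M}^\pi$ and $\mc{M}$ agree on interpreted symbols and evaluate that literal at the \emph{unprojected} value $\beta'(x)$; otherwise all interpreted literals of $C$ are false under $\beta'$, and because $\hat\beta$ assigns the relevant value $\pi(\beta'(x))\in R$ we get $\mc{M},\hat\beta\models C$ from the satisfied instance in $A[\vgts{x}/x]$, so that—provided $\hat\beta$ keeps those interpreted literals false—the witnessing literal must be uninterpreted and the substitution lemma transports it to $\mc{M}^\pi,\beta'\models C$.

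I expect the main obstacle to be the design of $\pi$ so that it simultaneously (i) is the identity on $R$ with image in $R$ (needed for the substitution lemma) and (ii) preserves the falsity of every interpreted literal on $x$ that is false under the given $\beta'$, uniformly across all clauses and all argument positions linked by shared $\fgt{f}{i}$ sets. This is exactly what rules $R_6$–$R_{13}$ are engineered to make possible: each supported interpreted literal contributes a falsifying ground term (for instance $gt+1$ for $x\le gt$) to $\vgts{x}$, so that $R$ contains enough integer boundary points to admit an order-preserving rounding $\pi$ that can never turn a false comparison into a true one. Checking that a single such $\pi$ discharges all interpreted literals at once, while remaining consistent on the tied argument positions, is the delicate step; by contrast, the facts that $\mc{M}^\pi$ is a $\mc{T}$-model and that clauses not mentioning $x$ are unaffected are routine.
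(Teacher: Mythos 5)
Your overall strategy matches the paper's: the forward direction is immediate, and the converse is proved by turning a model $\mc{M}$ of the instantiation into a model of $A$ via projections onto values of ground terms, with a structural-induction substitution lemma for uninterpreted literals and closest-value rounding for interpreted ones (the latter is exactly the paper's Lemma \ref{lemma:1}). The genuine gap is in the \emph{scope} of your model modification. You rewire only those argument positions $(f,i)$ with $\fgt{f}{i} = \vgts{x}$, project them onto $R = M(\vgts{x})$, and let $\hat\beta$ move only variables whose relevant set \emph{equals} $\vgts{x}$. The paper instead propagates the modification along the subsumption preorder: every position with $\fgt{f}{i} \dotsubset \vgts{x}$ is rewired, each onto \emph{its own} value set $M(\fgt{f}{i})$ rather than onto $M(\vgts{x})$, and its auxiliary assignment $\beta'$ projects every variable $y$ with $\vgts{y} \dotsubset \vgts{x}$. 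This is not cosmetic. Changing $f$ at the positions where $x$ occurs also changes the truth values of literals in clauses built over \emph{other} variables $y$ nested under $f$ (as in $f(g(y))$), and those clauses can only be re-satisfied if the positions and assignment belonging to $y$ are projected as well; with equality in place of subsumption they are not, and your substitution lemma already fails in the compound-subterm case: for $\vgts{y} \neq \vgts{x}$ the value $M(g)(\hat\beta(y))$ need not lie in $R$, so $\pi$ does not fix it.

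Concretely, take $A$ with clauses $q(f(x))$, $\ \fonot q(f(g(y))) \foor r(y)$, and $r(c)$, with $f,g,q,r$ uninterpreted. A solution of $\scs_A$ is $\vgts{y} = \fgt{g}{1} = \fgt{r}{1} = \{c\}$, $\ \vgts{x} = \fgt{f}{1} = \{g(c)\}$, $\ \fgt{q}{1} = \{f(g(c))\}$, so $x$ is instantiated with $g(c)$ only. Define $M$ over universe $\mathbb{Z}$ by $M(c)=0$; $M(g)(0)=1$ and $M(g)(v)=2$ otherwise; $M(f)(1)=2$ and $M(f)(v)=0$ otherwise; $M(q)$ true exactly at $2$; $M(r)$ true exactly at $0$ and $1$. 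Then $M$ satisfies $A[\{g(c)\}/x]$ (the second clause holds because $q(f(g(v)))$ is false for every $v \neq 0$, and $r(0)$ holds), but $M$ does not satisfy $A$. Your construction modifies only $f$, the unique position whose set equals $\{g(c)\}$, and since $R = \{M(g(c))\} = \{1\}$ is a singleton there is no freedom in choosing representatives: $M^{\pi}(f) \equiv M(f)(1) = 2$, while $g$, $q$, $r$ keep their old interpretations. Now the clause $\fonot q(f(g(y))) \foor r(y)$ fails at $y \mapsto 2$: we get $M(g)(2)=2$, $M^{\pi}(f)(2) = 2$, $M(q)(2)$ true, and $M(r)(2)$ false. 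So $\mc{M}^{\pi} \not\models A$; the constructed interpretation is not merely hard to verify, it is wrong, so no cleverer design of $\pi$ within your scheme can close the gap. The paper's construction also rewires $g$ and $r$, because $\{c\} \dotsubset \{g(c)\}$, giving $M^{\pi_x}(r) \equiv M(r)(0) = \fotrue$ and hence a genuine model of $A$. Any repair of your proof must incorporate this propagation along $\dotsubset$, projecting each position onto its own $M(\fgt{f}{i})$ (otherwise ground terms such as $g(c)$ would change their values, breaking the analogue of Corollary \ref{col:2}).
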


\begin{proof}
If $A[\n{vGT(x)_{\scs_A}/x}]$ is unsatisfiable, so is $A$ since the former is an implication of the latter. 
If $A[\n{vGT(x)_{\scs_A}/x}]$ is satisfiable with a model $\mc{M}$, then we construct a modified model $\mc{M}^{\pi_x}$ (as defined below) and show in lemma \ref{lemma:Mpi-sat-A} that $\mc{M}^{\pi_x}$ satisfies $A$. 
\end{proof}

Given a model $\mc{M}$ for the formula $A[\n{vGT(x)_{\scs_A}}/x]$, we construct a modified model $\mc{M}^{\pi_x}$ as follows: $|M^{\pi_x}| := |M|$. 
For any constant $c \in Con$, $M^{\pi_x}(c) := M(c)$. For any interpreted operator $op$, $M^{\pi_x}(op) := M(op)$. 
For any uninterpreted function $f$, $M^{\pi_x}(f)(v_{1:n}) := M(f)(\pi_x(f,1)(v_1), \cdots, \pi_x(f,n)(v_n))$, where 
$\pi_x(f,i)$ is defined as in Eq. \ref{eq:pi-fi}.
Intuitively, 
if the ground term set of $x$ does not {\em subsume} the ground term set of the $i^{th}$ argument of $f$,
or if $v_i$ is a value that $M$ assigns to a ground term for the $i^{th}$ argument of $f$, then $M^{\pi_x}(f)(.., v_i, ..) := M(f)(.., v_i, ..)$ 
Otherwise, $\pi_x(f,i)$ maps $v_i$ to a value that $M$ assigns to some ground term for the $i^{th}$ argument of $f$. 
Integers must be mapped to the closest such value (see the proof of Lemma \ref{lemma:1}). 
A ground term set $S$ {\em subsumes} a ground term set $R$, denoted by $R \dotsubset S$, 
if for every ground term $gt_1 \in R$ there exists a ground term $gt_2 \in S$ such that $gt_1$ is a subterm of $gt_2$.


\begin{gather}
\label{eq:pi-fi}
%
\pi_x(f,i)(v) = 
\begin{cases}
  v & \n{if} \ \fgt{f}{i} \ndotsubset \vgts{x} \\
  v & \n{else \ if} \ v \in M(\n{fGT}(f,i)_{\scs_A})\\
  v' \in M(\n{fGT}(f,i)_{\scs_A}) & \n{else \ if} \ v \not\in \mathbb{Z} \\
  v'\!\in\!M(\n{fGT}(f,i)_{\scs_A}), \n{s.t.} \ |v-v'| \ \n{is\ minimal} & \n{otherwise}
\end{cases}
\end{gather}

\begin{gather}
\label{eq:pi-x}
\pi_x(v) =  
\begin{cases}
v & \n{if} \ v \in M(\n{vGT}(x)_{\scs_A}) \\
v' \in M(\n{vGT}(x)_{\scs_A}) &  \n{else \ if}\ v \notin \mathbb{Z} \\
v' \in M(\n{vGT}(x)_{\scs_A}), \n{s.t.} \ |v-v'| \ \n{is\ minimal} & \n{otherwise}
\end{cases}
\end{gather}

We also define $\pi_x$ (as in Eq. \ref{eq:pi-x}) to denote the value projection with respect to a variable $x$. 
If $\vgts{x} = \fgt{f}{i}$, for instance because $x$ occurs as the $i^{th}$ argument of $f$, then $\pi_x = \pi_x(f,i)$. 
%
Before showing the proof of lemma \ref{lemma:Mpi-sat-A} used in our main theorem, we introduce some auxiliary corollaries and lemmas. The proofs of the lemmas can be found in  Appendix \ref{a:sect:proofs}.

%
\begin{corollary}
\label{col:1}
If $\vgts{x} \neq \infty$, then $\pi_x(v) \in M(\vgts{x})$, for all $v \in |M|$.
\end{corollary}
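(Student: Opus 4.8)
The statement is almost immediate from the definition of $\pi_x$ in Eq.~\ref{eq:pi-x}: each of its three branches is designed to return a value lying in $M(\vgts{x})$, so the whole argument reduces to a case split together with a check that the two ``choice'' branches are well defined. The plan is therefore to first argue that $M(\vgts{x})$ is a finite nonempty set, and then to inspect the three branches of Eq.~\ref{eq:pi-x} one by one.

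For the nonemptiness step I would invoke rule $R_0$ of Figure~\ref{fig:sufGT-CRules}: since $x$ is a variable occurring in $A$, it occurs in some clause $C$, so $R_0$ forces $\vgts{x} \neq \emptyset$. Combined with the hypothesis $\vgts{x} \neq \infty$, this makes $\vgts{x}$ a finite nonempty set of ground terms; as ground terms are interpreted independently of the variable assignment $\beta$, the image $M(\vgts{x}) = \{ M(gt) \mid gt \in \vgts{x}\}$ is then a finite nonempty subset of $|M|$. With this in hand the case analysis on $v \in |M|$ follows the three lines of the definition. If $v \in M(\vgts{x})$, then $\pi_x(v) = v \in M(\vgts{x})$ directly. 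If $v \notin M(\vgts{x})$ and $v \notin \mathbb{Z}$, then $\pi_x(v)$ is by definition some element $v' \in M(\vgts{x})$, which exists precisely because the set is nonempty, so $\pi_x(v) \in M(\vgts{x})$. Finally, if $v \notin M(\vgts{x})$ and $v \in \mathbb{Z}$, then $\pi_x(v)$ is the element $v' \in M(\vgts{x})$ for which $|v - v'|$ is minimal; since $M(\vgts{x})$ is finite and nonempty this minimum is attained, and again $\pi_x(v) \in M(\vgts{x})$.

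The only point needing a word of care---and the closest thing to an obstacle---is the well-definedness of the minimizing choice in the last branch, namely that the quantity $|v - v'|$ is meaningful and that a minimizer actually exists. Existence is free because we minimize over a finite nonempty set; the expression $|v-v'|$ is meaningful here because, by sort-correctness, when $v \in \mathbb{Z}$ the relevant elements of $M(\vgts{x})$ are the interpretations of integer ground terms (such as those contributed to $\vgts{x}$ by rules $R_6$--$R_{13}$ for an integer-sorted $x$). I would note this in passing but would not expect it to require real work: the corollary asserts only membership, and every branch of the definition of $\pi_x$ delivers that membership by construction.
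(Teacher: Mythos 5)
Your proposal is correct and follows essentially the same route as the paper, whose entire proof is the single sentence that the claim ``follows directly from the definition of $\pi_x$.'' The details you supply---nonemptiness of $\vgts{x}$ via rule $R_0$, finiteness from the hypothesis $\vgts{x} \neq \infty$, and existence of the minimizer in the integer branch---are exactly what the paper leaves implicit, so your write-up is just a more careful rendering of the same argument.
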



The following lemmas show that if $\mc M^{\pi_x}$ does not satisfy a
literal $l$ in a CNF formula $A$, a modified variable assignment $\beta'$ can be
found such that $\mc M$ together with $\beta'$ does not satisfy $l$. 
Lemma~\ref{lemma:1} formulates the claim for interpreted literals, 
and Lemma~\ref{lemma:3} gives a stronger variant (with value equality rather than implication) 
for uninterpreted literals.

%
\begin{lemma}
\label{lemma:1}
Let $x$ be a variable with $\vgts{x} \neq \infty$, $\mc{M}$ a model, $\beta$ a variable assignment, and 
$\beta' = \lambda y.\ \n{if}\ \vgts{y} \dotsubset \vgts{x} \ \n{then}\ \pi_y(\beta(y))\ \n{else}\ \beta(y)$.
Then  
$(M, \beta') \models l$ implies $(M^{\pi_x}, \beta) \models l$
for all interpreted literals $l$ in $A$.
\end{lemma}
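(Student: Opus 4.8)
The plan is to fix an arbitrary interpreted literal $l$ of $A$, let $y$ be the variable occurring in it (if $l$ is ground the argument degenerates), and proceed in three stages: reduce the claim to a single comparison, settle that comparison via the closest-value property of $\pi_y$, and finally discharge the side conditions that make the reduction legitimate. The first stage is to show that $M$ and $M^{\pi_x}$ agree on every ground term $gt$ of $A$, i.e. $M^{\pi_x}(gt)=M(gt)$, by induction on the structure of $gt$: constants and interpreted operators are copied verbatim into $M^{\pi_x}$, and for an uninterpreted application $f(gt_1,\dots,gt_n)$ rule $R_2$ places each $gt_i$ into $\fgt{f}{i}$, so $M(gt_i)\in M(\fgt{f}{i})$ and hence $\pi_x(f,i)(M(gt_i))=M(gt_i)$ by the first or second clause of Eq.~\ref{eq:pi-fi}; the induction hypothesis then yields the claim. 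Since $M^{\pi_x}$ interprets every comparison symbol exactly as $M$ does, evaluating $l$ under $(M^{\pi_x},\beta)$ differs from evaluating it under $(M,\beta)$ only in the treatment of $y$, the ground argument contributing the same value on both sides.

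For the second stage, note that by $R_4$ and $R_5$ and the hypothesis $\vgts{x}\neq\infty$, the only way $y$ can influence $l$ is by sitting directly as one argument of a supported comparison $op\in\{=,<,\leq,>,\geq\}$ whose other argument is ground; otherwise $\vgts{y}=\infty$, which being infinite is not subsumed by the finite set $\vgts{x}$, so $\vgts{y}\ndotsubset\vgts{x}$, $\beta'(y)=\beta(y)$, and the two sides coincide by stage one. So I would assume $\vgts{y}\dotsubset\vgts{x}$, whence $\beta'(y)=\pi_y(\beta(y))$ and $\vgts{y}$ is finite, so the rule among $R_6$–$R_{13}$ matching the shape of $l$ has fired and deposited the relevant boundary term into $\vgts{y}$. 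I would then argue contrapositively. In the representative case $l=(y\leq gt)$: if $(M^{\pi_x},\beta)\not\models l$ then $\beta(y)>M(gt)$, hence $\beta(y)\geq M(gt)+1$ over the integers; rule $R_6$ gives $gt{+}1\in\vgts{y}$, so $M(gt)+1\in M(\vgts{y})$, and since every in-set value $\leq M(gt)$ is strictly farther from $\beta(y)$ than $M(gt)+1$ is, the nearest in-set value $\pi_y(\beta(y))$ satisfies $\pi_y(\beta(y))>M(gt)$, so $(M,\beta')\not\models l$. The other comparisons are symmetric, each matched to its rule ($R_7$ for $\geq$, $R_{11},R_9,R_{10}$ for strict orders, $R_8,R_{12}$ for negated order and equality literals, $R_{13}$ for integer equality), and the non-integer equality literal $\fonot(y=gt)$ uses $gt\in\vgts{y}$ from $R_{12}$ together with the ``any in-set value'' clause of Eq.~\ref{eq:pi-x}, so that $\beta(y)=M(gt)$ forces $\pi_y(\beta(y))=M(gt)$.

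The step I expect to be most delicate is the integer minimality argument of the second stage: for each $op$ one must verify that the exact term inserted by $R_6$–$R_{13}$ is the threshold's neighbour, so that the nearest in-set point cannot cross from the falsifying to the satisfying side of $M(gt)$. I would also record the facts used silently above, namely that $\dotsubset$ is reflexive (so the case $y=x$ is covered, with $\pi_y=\pi_x$), that a finite $\vgts{x}$ cannot subsume an $\infty$ set (so $\vgts{y}\dotsubset\vgts{x}$ genuinely forces $\vgts{y}$ finite and a falsifying rule to have fired rather than $R_4,R_5,R_{14}$), and Corollary~\ref{col:1}, which guarantees $\pi_y(\beta(y))\in M(\vgts{y})$ so that the projected value is always a real ground-term value. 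Combining the first two stages over the finitely many shapes of $l$ gives $(M,\beta')\models l\Rightarrow(M^{\pi_x},\beta)\models l$ for every interpreted literal $l$ in $A$.
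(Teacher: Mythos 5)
Your proposal is correct and follows essentially the same route as the paper's proof: reduce to a supported comparison $op(y,gt)$ applied to a variable and a ground term (all other shapes force $\vgts{y}=\infty$ via $R_4$/$R_5$, making $\beta'(y)=\beta(y)$ and the claim trivial), then use the boundary terms deposited by $R_6$--$R_{13}$ together with the minimal-distance clause of $\pi_y$ to show the projected value cannot cross to the satisfying side of $M(gt)$ --- your contrapositive phrasing is just the paper's argument by contradiction, and your ``stage one'' re-proves the paper's Corollary~\ref{col:2} inline rather than citing it. One nit: in the non-integer disequality case, $\beta(y)=M(gt)\in M(\vgts{y})$ means the \emph{identity} clause of Eq.~\ref{eq:pi-x} applies, not the ``any in-set value'' clause, though your conclusion $\pi_y(\beta(y))=M(gt)$ is exactly right.
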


\begin{lemma}
\label{lemma:3}
Let $x$ be a variable with $\vgts{x} \neq \infty$, $\mc{M}$ a model, $\beta$ a variable assignment, and 
$\beta' = \lambda y.\ \n{if}\ \vgts{y} \dotsubset \vgts{x} \ \n{then}\ \pi_y(\beta(y))\ \n{else}\ \beta(y)$.
Then 
$
(M,\beta')(l)
= 
(M^{\pi_x}, \beta)(l) 
$ for all uninterpreted literals $l$ in $A$.
\end{lemma}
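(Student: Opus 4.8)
The plan is to prove the stated identity by simultaneous structural induction on the uninterpreted term underlying the literal, and then lift the term-level result to atoms and literals. The feature that shapes the whole argument is that the identity \emph{cannot} hold for a bare variable $y$: there $(M^{\pi_x},\beta)(y)=\beta(y)$ while $(M,\beta')(y)=\pi_y(\beta(y))$ whenever $\vgts{y}\dotsubset\vgts{x}$, and these differ in general. I would therefore prove two coupled claims at once, for every uninterpreted term $t$ in $A$: the \emph{value identity} $(\dagger)$ stating $(M^{\pi_x},\beta)(t)=(M,\beta')(t)$ when $t$ is not a bare variable, and the \emph{argument identity} $(\star)$ stating $\pi_x(f,i)\big((M^{\pi_x},\beta)(t)\big)=(M,\beta')(t)$ whenever $t$ occurs as the $i$-th argument of an uninterpreted symbol $f$. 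The role of $(\star)$ is that the outer projection $\pi_x(f,i)$ introduced by the definition of $M^{\pi_x}$ exactly absorbs the discrepancy between $\beta$ and $\beta'$ at argument positions.

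For $(\dagger)$ on a compound term $f(t_1,\dots,t_n)$ I unfold $M^{\pi_x}(f)$ and apply $(\star)$ to each argument: $(M^{\pi_x},\beta)(f(t_1,\dots,t_n)) = M(f)(\dots,\pi_x(f,i)((M^{\pi_x},\beta)(t_i)),\dots) \eqih M(f)(\dots,(M,\beta')(t_i),\dots) = (M,\beta')(f(t_1,\dots,t_n))$; the base case of a constant $c$ is immediate since $M^{\pi_x}(c)=M(c)$. For $(\star)$ I split on the shape of the argument $t$. If $t$ is a ground term $gt$, then by $(\dagger)$ and assignment-independence $(M^{\pi_x},\beta)(gt)=M(gt)$, rule $R_2$ gives $gt\in\fgt{f}{i}$, so $M(gt)\in M(\fgt{f}{i})$ is fixed by $\pi_x(f,i)$ and equals $(M,\beta')(gt)$. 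If $t$ is a variable $y$, rule $R_1$ gives $\vgts{y}=\fgt{f}{i}$; when $\fgt{f}{i}\ndotsubset\vgts{x}$ the projection $\pi_x(f,i)$ is the identity (first line of Eq.~(\ref{eq:pi-fi})) and $\beta'(y)=\beta(y)$, so both sides equal $\beta(y)$, while when $\fgt{f}{i}\dotsubset\vgts{x}$ we have $\beta'(y)=\pi_y(\beta(y))$, and $\pi_x(f,i)$ (past its first line) and $\pi_y$ coincide because both project onto the \emph{same} set $M(\fgt{f}{i})=M(\vgts{y})$ by an identical case split.

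The main obstacle is the remaining $(\star)$ case, where $t=t[x_{1:m}]$ is a compound term at position $i$ of $f$. By $(\dagger)$ as induction hypothesis $(M^{\pi_x},\beta)(t)=(M,\beta')(t)$, so it suffices to show that $\pi_x(f,i)$ fixes $(M,\beta')(t)$; the only nontrivial situation is $\fgt{f}{i}\dotsubset\vgts{x}$, where I must establish $(M,\beta')(t)\in M(\fgt{f}{i})$. Rule $R_3$ guarantees that every ground instantiation $t[gt_1/x_1,\dots,gt_m/x_m]$ with $gt_j\in\vgts{x_j}$ lies in $\fgt{f}{i}$; since each such $gt_j$ is a subterm of that instantiation, $\vgts{x_j}\dotsubset\fgt{f}{i}$, and by transitivity of $\dotsubset$ together with $\fgt{f}{i}\dotsubset\vgts{x}$ I obtain $\vgts{x_j}\dotsubset\vgts{x}$, which is moreover finite. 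Hence $\beta'$ applies the projection to each $x_j$, and Corollary~\ref{col:1} yields $\beta'(x_j)=\pi_{x_j}(\beta(x_j))=M(gt_j)$ for some $gt_j\in\vgts{x_j}$. The substitution lemma then rewrites $(M,\beta')(t)=M\big(t[gt_1/x_1,\dots,gt_m/x_m]\big)$, which by $R_3$ is the $M$-value of an element of $\fgt{f}{i}$, so $(M,\beta')(t)\in M(\fgt{f}{i})$ and $\pi_x(f,i)$ fixes it, proving $(\star)$.

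Finally I lift to literals. An uninterpreted literal is (the negation of) either a predicate application $p(t_1,\dots,t_n)$, treated exactly like a function so that its truth value agrees under $(M^{\pi_x},\beta)$ and $(M,\beta')$ by the computation of $(\dagger)$ via $(\star)$ on its arguments, or an equality $s=t$ between uninterpreted terms whose two sides are applications or ground terms rather than bare variables (equalities with a bare variable side are interpreted literals covered by Lemma~\ref{lemma:1}), so that $(\dagger)$ forces equal values on each side and hence equal truth values; negation preserves the identity. The conceptual crux, and the step I expect to require the most care, is the compound-argument case of $(\star)$: reconciling the projection $\pi_x(f,i)$ applied on the model side with the projections $\pi_{x_j}$ applied by $\beta'$ on the assignment side, which hinges on the interplay of rule $R_3$, transitivity of $\dotsubset$, and Corollary~\ref{col:1}.
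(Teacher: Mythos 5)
Your proposal follows essentially the same route as the paper's proof: the paper also proceeds by structural induction over all non-variable expressions of $A$, reduces the uninterpreted-application case to showing $\pi_x(f,i)\bigl((M^{\pi_x},\beta)(t_i)\bigr)=(M,\beta')(t_i)$ at each argument position (your $(\star)$), and discharges that by the same case split: variable arguments via rule $R_1$ and the dichotomy $\vgts{y}\dotsubset\vgts{x}$ versus $\vgts{y}\ndotsubset\vgts{x}$, ground arguments via $R_2$, and compound arguments via $R_3$, Corollary \ref{col:1}, and the substitution step $(M,\beta')(s[x_{1:m}])=M(s[gt_{1:m}])$. Your only real departure is presentational: you make the mutual induction $(\dagger)$/$(\star)$ explicit where the paper leaves it implicit (``it suffices to show\dots''), and you spell out why $\vgts{x_j}\dotsubset\vgts{x}$ in the compound case, which the paper states more tersely.

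There is, however, one case your induction does not cover and the paper does: \emph{interpreted} function symbols occurring strictly inside an uninterpreted literal, e.g.\ the argument $g(a)+g(b)$ in $p(g(a)+g(b))$, or the literal $g(h(y)+1)=c$. Your proof of $(\dagger)$ unfolds $M^{\pi_x}(f)$ via the projections $\pi_x(f,i)$, but that unfolding is the definition of $M^{\pi_x}$ only for \emph{uninterpreted} $f$; for an interpreted operator $op$ one has $M^{\pi_x}(op)=M(op)$ with no projections, so such terms fall outside both your $(\dagger)$ argument (wrong unfolding) and your $(\star)$ argument (which presupposes an uninterpreted enclosing symbol). The repair is immediate and is exactly the reasoning you already use for equality atoms: since the literal is uninterpreted, the arguments of $op$ are never bare variables, so the induction hypothesis $(\dagger)$ applies to them directly and the values agree; the paper devotes its final case (``Let $l:=f(t_{1:n})$ be an expression with $f$ an interpreted function'') to precisely this. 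So this is a small omission in the case analysis, not a flaw in the approach.
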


\begin{lemma}
\label{lemma:Mpi-sat-A}
Let $x$ be a variable in $A$ with $\n{vGT}(x)_{\scs_A} \neq \infty$ and $\mc{M}$ a model of $A[\n{vGT}(x)_{\scs_A}/x]$, then $\mc{M}^{\pi_x}$ is a model of $A$.
\end{lemma}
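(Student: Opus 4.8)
The plan is to exploit the CNF structure of $A$ together with the two transfer lemmas as black boxes. Since $A$ is a conjunction of clauses with all variables universally quantified, it suffices to fix an arbitrary clause $C$ of $A$ and an arbitrary variable assignment $\beta$, and to show $(M^{\pi_x}, \beta) \models C$. I would argue by contradiction, assuming that $(M^{\pi_x}, \beta)$ falsifies \emph{every} literal of $C$ (which is exactly what it means for a clause to fail), and then deriving that $\mc{M}$ would fail a conjunct of $A[\vgts{x}/x]$.

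First I would introduce the projected assignment $\beta' = \lambda y.\ \n{if}\ \vgts{y} \dotsubset \vgts{x}\ \n{then}\ \pi_y(\beta(y))\ \n{else}\ \beta(y)$, precisely the one appearing in Lemmas~\ref{lemma:1} and~\ref{lemma:3}. For each literal $l$ of $C$ I split on whether $l$ is interpreted or uninterpreted. If $l$ is interpreted and $(M^{\pi_x}, \beta) \not\models l$, the contrapositive of Lemma~\ref{lemma:1} yields $(M, \beta') \not\models l$; if $l$ is uninterpreted, Lemma~\ref{lemma:3} gives $(M, \beta')(l) = (M^{\pi_x}, \beta)(l)$ and hence again $(M, \beta') \not\models l$. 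Collecting these over all literals of $C$ gives $(M, \beta') \not\models C$.

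The crux is to contradict this using the hypothesis $\mc{M} \models A[\vgts{x}/x]$. Since subsumption is reflexive, $\vgts{x} \dotsubset \vgts{x}$, so $\beta'(x) = \pi_x(\beta(x))$, and Corollary~\ref{col:1} guarantees $\pi_x(\beta(x)) \in M(\vgts{x})$; thus there is a ground term $gt \in \vgts{x}$ with $M(gt) = \beta'(x)$. The clause $C[gt/x]$ is a conjunct of $A[\vgts{x}/x]$, and because $\mc{M}$ models $A[\vgts{x}/x]$ under \emph{every} assignment to the remaining variables, in particular under $\beta'$, we obtain $(M, \beta') \models C[gt/x]$. A substitution-lemma step then shows $(M, \beta')(C[gt/x]) = (M, \beta')(C)$, since $gt$ is ground and $\beta'(x) = M(gt)$, so evaluating $C[gt/x]$ under $\beta'$ agrees with evaluating $C$ under $\beta'$. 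Hence $(M, \beta') \models C$, contradicting the previous paragraph. Therefore $(M^{\pi_x}, \beta) \models C$, and since $C$ and $\beta$ were arbitrary, $\mc{M}^{\pi_x} \models A$.

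The main obstacle I anticipate is bookkeeping rather than conceptual. I must ensure the literal-by-literal case split lifts cleanly to the whole clause, apply Corollary~\ref{col:1} to the correct projection $\pi_x$ (rather than $\pi_x(f,i)$), and absorb the case $x \notin C$ by noting $C[gt/x] = C$. The one genuinely delicate point is the interplay between the universal quantifier over assignments and the fixed $\beta'$: the argument requires $A[\vgts{x}/x]$ to hold under the specific $\beta'$ on the still-present variables, which is legitimate precisely because $\mc{M}$ models the instantiated formula under all assignments.
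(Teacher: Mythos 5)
Your proposal is correct and follows essentially the same route as the paper's proof: the same projected assignment $\beta'$, Corollary~\ref{col:1} to obtain a ground term $gt$ with $M(gt) = \pi_x(\beta(x))$, the observation that the instantiation by $gt$ is a conjunct of $A[\vgts{x}/x]$, a substitution step, and the transfer of literal satisfaction via Lemmas~\ref{lemma:1} and~\ref{lemma:3}. The only difference is presentational — you argue contrapositively (assuming every literal of $C$ fails under $(M^{\pi_x},\beta)$ and deriving a contradiction), whereas the paper argues directly from $(M,\beta') \models A$ forward to $(M^{\pi_x},\beta) \models l^C$ — which is the same argument run in the opposite direction.
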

\begin{proof}
Let $A'$ denote $A[\n{vGT}(x)_{\scs_A}/x]$. Since $\mc{M}$ is a model of $A'$, for every variable assignment $\beta:Var \to |M|$, we have $(M, \beta) \models A'$.
Let $\beta_0$ be an arbitrary variable assignment. 
By corollary~\ref{col:1}, we know that $\pi_x(\beta_0(x)) = M(gt_0)$ for
some ground term $gt_0 \in \n{vGT}(x)_{\scs_A}$. The instantiation
$A[gt_0/x]$ is included in $A'$ and thus $(M,\beta) \models A[gt_0/x]$ for any $\beta$.
Let 
$\beta'_0 = \lambda y.\ \n{if}\ \vgts{y} \dotsubset \vgts{x} \ \n{then}\ \pi_y(\beta_0(y))\ \n{else}\ \beta_0(y)$.
Assignment $\beta'_0$ maps $x$ to $\pi_x(\beta_0(x)) = M(gt_0)$ and $(M,\beta'_0) \models A[gt_0/x]$, therefore $(M, \beta'_0) \models A$. \\
Assuming that $A$ is in CNF, 
there must be for every clause $C$ in $A$ a literal $l^C$ in $C$ with $(M,\beta'_0) \models l^C$.
Using lemma~\ref{lemma:1} for interpreted and lemma~\ref{lemma:3} for
uninterpreted literals, we know that also $(M^{\pi_x}, \beta_0) \models l^C$. 
Hence, $M^{\pi_x}$ is a model for $l^C$, $C$ and finally for $A$.

\end{proof}

%


\section{Practical Optimizations}
\label{sect:optimization}

%
\subsection{Simulating NNF}

Previous section established that if the input formula is in CNF, we can instantiate variables with their computed sets of sufficient ground terms. 
Computing such sets, however, does not require the formula to be in CNF. That is, the constraint system of Figure \ref{fig:sufGT-CRules}   
needs only the CNF polarity of the literals of the input formula (see rules $R_6$ to $R_{13}$).
Therefore, instead of actually converting the original formula to CNF, we (1) {\em simulate} the NNF (negation normal form) conversion (without actually changing the formula) to compute polarity, and (2) skolemize all existential quantifiers\footnote{If a formula $A$ is not in CNF, the instantiation of a variable $x$ with a set $S$ of ground terms should be adjusted as $A[S/x] := A[\bigwedge \limits_{gt \in S} B_x[gt/x] / B_x]$, where $B_x$ is the smallest subformula containing $x$.}.
This computation does not introduce any considerable overhead. 
It should be noted that conversion to CNF using distribution (as opposed to Tseitin encoding \cite{tseitin}) has the additional advantage that it minimizes the scope of each variable. 
This can significantly improve our simplification approach. 
Distribution, however, is very costly in practice. 
Computing minimal variable scopes without performing distribution is left for future work.

\subsection{Limiting Instantiations}
\label{limiting}

\begin{algorithm}[t]
\KwData{$A: For, C_{max}: \mathbb{N}$} 
\KwResult{$\n{NoElim}: Set\unary{Var}$}
\Begin{
 $\n{NoElim} \gets \{x \in vars(A) \mid \vgts{x} = \infty\}$ \\
 \Repeat{$\n{NoElim}\ is\ unchanged$}{
  \For{$x \in vars(A) \setminus \n{NoElim}$}{
    $\n{repFactor} \gets  |scopevars(x) \cap \n{NoElim}| = \emptyset \ ? \ 0 : 1$ \\
    $cost_x \gets (\prod\limits_{y \in scopevars(x)\setminus \n{NoElim}} |\vgts{y}|) * \n{repFactor}$ \\
    \If{$cost_x > C_{max}$}{
     $select\ m \in scopevars(x)\setminus \n{NoElim}\ s.t.\ |\vgts{m}|\ is\ maximum$\\
     $\n{NoElim} \gets \n{NoElim} \cup \{m\}$ 
    }
  }
 }
 \Return{$\n{NoElim}$}
}
\caption{Heuristic detection of expensive variables with respect to a threshold}
\label{algo:qbi2}
\end{algorithm}

Our simplification approach eliminates those variables that have finite sets of sufficient ground terms by instantiating them with the computed ground terms. 
In practice, such instantiation may increase the occurrences of non-eliminable variables (see the example of Appendix \ref{example}).
Our experiments with Z3 and CVC4 show that this increase in the number of variable occurrences can considerably increase the solving time, specially for nested quantifiers.


We use Algorithm~\ref{algo:qbi2} to estimate and limit the cost of variable elimination based on the number of  variable occurrences that it introduces. 
The algorithm tries to maximize  the number of eliminated variables while keeping the cost low. 
Given a formula $A$ and a threshold cost $C_{max}$, this algorithm returns a set of variables $\n{NoElim}$ whose elimination causes the cost to exceed $C_{max}$. 
Line $2$ initializes the $\n{NoElim}$ set to the set of all variables whose sets of sufficient ground terms are infinite, and thus will not be eliminated by our approach. 
Lines $4$-$9$ evaluate the cost of eliminating a variable $x$ that does not belong to $\n{NoElim}$. 
Instantiating $x$ with its sufficient ground terms, in the worst case, replicates all non-eliminable variables (either free or bound) 
that appear in the scope of $x$ (denoted by $\n{scopevars}(x)$), where the scope of $x$ is the body of the quantified formula that binds $x$. 
We estimate the cost of eliminating all eliminable variables in the scope of $x$ by $cost_x$. 
If this number exceeds the given threshold, then a variable $m$ with the maximum number of instantiations will be marked as non-eliminable. 
The process then starts over.
\section{Evaluation}
\label{sect:evaluation}

\begin{figure}[tpb]
 \centering
 \begin{subfigure}[b]{0.49\textwidth}
  \includegraphics[width=1\textwidth]{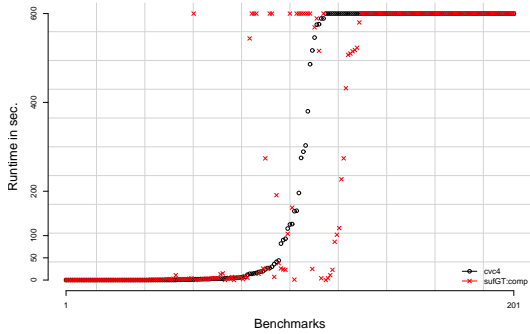}
  \caption{CVC4, original vs. simplified (complete)}
  \label{fig:cvc4-vs-sufgt:comp}
 \end{subfigure}
 \begin{subfigure}[b]{0.49\textwidth}
  \includegraphics[width=1\textwidth]{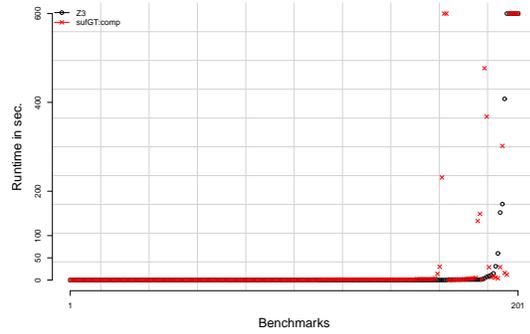}
  \caption{Z3, original vs. simplified (complete)}
  \label{fig:z3-vs-sufgt:comp}
 \end{subfigure}
 \begin{subfigure}[b]{0.49\textwidth}
  \includegraphics[width=1\textwidth]{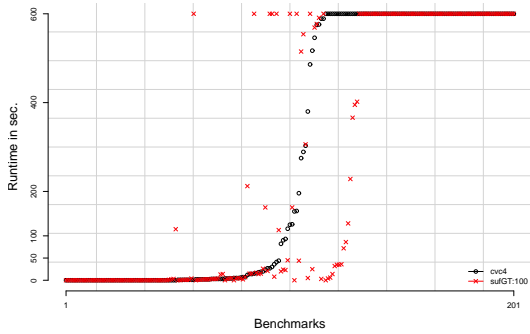}
  \caption{CVC4, original vs. simplified ($C_{max}$ = 100)}
  \label{fig:cvc4-vs-sufgt:20}
 \end{subfigure}
 \begin{subfigure}[b]{0.49\textwidth}
  \includegraphics[width=1\textwidth]{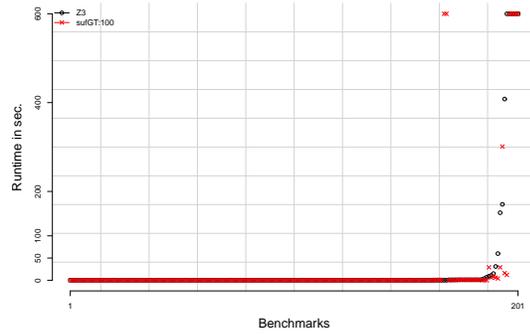}
  \caption{Z3, original vs. simplified ($C_{max}$ = 100)}
  \label{fig:z3-vs-sufgt:20}
 \end{subfigure}
\caption{Experimental results on the benchmarks of the SMT-COMP/AUFLIA-p}
\end{figure}

We have implemented our approach in a prototype tool and performed experiments on the  SMT-COMP benchmarks of 2012 in the AUFLIA-p/2012 division, using CVC4 (version 1.0) and Z3  (version 4.1) solvers.
We ran both solvers on all benchmarks on 
an AMD DualCore Opteron Quad, $2.6$GHz with $32$GB memory. 

For each benchmark, we compare the original runtime of each solver (with no simplification) against (1) a complete variable elimination, (2) a limited variable elimination where  $C_{max} = 100$. Figures \ref{fig:cvc4-vs-sufgt:comp} and \ref{fig:cvc4-vs-sufgt:20} give the comparison results for CVC4, 
and Figures \ref{fig:z3-vs-sufgt:comp} and \ref{fig:z3-vs-sufgt:20} give the results for Z3. The $x$-axis of each plot shows the benchmarks, sorted according to the original runtime of the solvers, and the $y$-axis gives the runtime in seconds. Time-outs and `unknown' outputs are represented identically. The time-out limit is 600 seconds.

For CVC4, the complete variable elimination improves the solving time of $37$  
cases  ($18$\%)--average speedup\footnote{Speedup = old solving time / new solving time, where 0 second is changed to $0.5$ second.}  $49$x--out of which $16$ 
were originally unsolvable, and worsens $55$  
cases ($27$\%)--average speedup $0.45$. 
The limited variable elimination, on the other hand, improves $39$ 
cases ($19$\%)--average speedup $57$x--out of which $15$  
were originally unsolvable, and worsens $32$ 
cases ($15$\%)--average speedup $0.48$. 
Z3 is known to be highly efficient in the AUFILA division (winner since 2008); its original runtime on many benchmarks is zero. 
The complete variable elimination, however, worsens $70$  
of these benchmarks ($34$\%)--average speedup $0.38$--and improves $11$ 
cases ($5$\%)--average speedup $10$x--out of which one was originally unsolvable. 
The limited variable elimination, on the other hand, worsens only $8$ 
cases ($4$\%)--average speedup $0.35$--and improves $14$  
cases ($7$\%)--average speedup $9.4$x--out of which one was  originally unsolvable. 
%

The main reason for slow down is the introduction of too many variable occurrences when not all variables are eliminable. Thus, as shown by these plots, 
for both solvers, the limited variable elimination produces  stronger results\footnote{Detailed information of the benchmarks are available at  {\tiny \url{http://i12www.ira.uka.de/~elghazi/sufGT_smt13_expData/}}}. 
However, even when {\em all} variables are eliminated, it is still possible that the solving time worsens as 
the number of instantiations that we produce can be higher than the number of instantiations that the solver would generate while solving the quantified formula. 
Although feasible in theory, this case was never observed in our experiments. 

Although variable elimination with a limited cost can result in significant improvements of solving time, the experiments show that in some cases such as the two new time-outs of Figure \ref{fig:z3-vs-sufgt:20}, a finer-grained limitation decision is needed. Investigating such cases is left as future work.
\vspace{-1mm}
\section{Related Work}
\label{sect:related-work}
Quantifier elimination in its traditional sense (aka. QE) refers to the property that an FOL theory $\mc{T}$ admits QE if for each formula $\phi$, there exists 
a quantifier-free formula $\phi'$ so that for all models $\mc{M}$, $\mc{M} \models_\mc{T} \phi \Leftrightarrow \phi'$.
Most  applications of QE either provide decision procedures for  fragments of FOL, or only prove their decidability. 
For example, the decidability proof of the Presburger arithmetic theory shows that the augmented theory with divisibility predicates admits QE \cite{enderton-2001}. 
Another example  is the Fourier-Motzkin QE procedure for linear rational arithmetic (see \cite{omega-test}).
 QE is applicable to formulas that are purely in one of the known arithmetic theories, and eliminates  those variables whose enclosing formulas are in a theory that admits QE. Consequently, 
it is not suitable as a general, stand-alone simplification for SMT formulas.

Another approach to eliminate quantifiers was proposed in \cite{gladisch-sat} where partial FOL models are represented as programs. A program generation technique tries to heuristically generate a program $P_i$ for a quantified formula $\phi_i$ in $F := \phi_1\wedge\ldots\wedge\phi_n$
such that the proof obligation $[P_i](\phi_1, \dots, \phi_n \Rightarrow \phi_i)$ can be discharged using a theorem prover. 
If such a program is found, $F$ is modified to $\phi'_1\wedge\ldots\wedge\phi'_n$ (without $\phi_i$) where $\phi'_j\equiv[P_i]\phi_j$. 
The program generation and verification loop can be repeated until all quantified formulas are eliminated.
Such an approach is very different from ours and is sound only for satisfiable formulas.


Our work was motivated by \cite{array-property-fragment} and \cite{mbqi} in which quantifiers are eliminated via instantiation. In \cite{array-property-fragment}, a decision procedure is proposed for the \emph{Array Property} fragment of FOL which supports a combination of Presburger arithmetic for  index terms, and equality with uninterpreted functions and sorts (EUF) for  array terms. Similar to ours, this work  
instantiates universally quantified variables with a finite set of ground terms to generate an equisatisfiable formula.
They prove the existence of such sets for their target fragment. Our approach, however, targets general FOL  and leaves a variable uninstantiated if its  set of ground terms is infinite. We believe that we can successfully handle the Array Property fragment. Experiments are left for future work.


In \cite{mbqi},  Model-based Quantifier Instantiation (MBQI) is proposed for Z3. Similar to ours, this work constructs a system of set constraints $\varDelta_F$ to compute sets of ground terms for instantiating quantified variables. Unlike us, however, they do not calculate a solution upfront, but instead, propose a fair enumeration of the (least) solution of $\varDelta_F$ with certain properties. Assuming such enumeration, one can incrementally construct and check the quantifier-free formulas as needed\footnote{In practice, they guide the quantifier instantiation using model checking which, in turn, uses an SMT solver.}. If $\varDelta_F$ is \emph{stratified},   $F$ is in a decidable fragment, and  termination of the procedure is guaranteed. 
Otherwise the procedure can fall back on the quantifier engine of Z3 and provide helpful instantiation ground terms. 
Consequently, this technique can only act as an internal engine of an SMT solver and cannot provide a stand-alone formula simplification as ours does.

Variable expansion has also been proposed for quantified boolean formulas (QBF). In \cite{qbf_resolve_and_expand}, a reduction of QBF to propositional conjunctive normal form (CNF) is presented where universally quantified variables are eliminated via expansion. Similar to our approach, they  introduce cost functions, but with the goal of keeping the size of the generated CNF small. 
\section{Conclusion}
\label{sect:conclusion}
We described a general simplification approach for quantified SMT formulas.
Based on an analysis of the ground term occurrences at function applications, we compute  \emph{sufficient ground term sets} for each universally quantified variable. 
We proved that instantiating (thus eliminating) any variable whose computed set  is finite, results in an equisatisfiable formula. 
Elimination of each variable is independent of the others. Thus we  
improve the performance of our technique by restricting the set of eliminable variables: we defined a prioritization  algorithm that  
tries to maximize the number of eliminable variables while keeping the estimated elimination cost below a  threshold.  
We evaluated our approach using two configurations and two solvers on a large subset of the SMT-COMP benchmarks. 
Our results show that 
(1) SMT benchmarks contain  many variables that can be eliminated by our technique,
(2) our complete variable instantiation may introduce significant overhead and thus slow down the solvers,
(3)  instantiation along with  prioritization  shows  improvement of  the solving time and score. 

We believe that our technique can provide an easy framework for extending arbitrary SMT solvers with quantifier support. If we ignore termination and performance related rules when generating the set constraint system, we will have an incremental and fair procedure for building ground term sets. Using a finite model checker, like in \cite{mbqi}, can then provide a framework for extending SMT solvers with quantifier support. Investigating this idea is left for future work.



%
\label{sect:bib}
\bibliographystyle{plain}
\bibliography{ref}

\appendix

\section{Expansion Example}
\label{example}

The following example illustrates a case where eliminating one variable can result in increasing the occurrences of the other variables. This can introduce an overhead for the solver if the involved terms are complex.

\begin{example}
\label{examp1}
Let $\forall x \mid (\psi(x) \foor \forall y, z \mid \varphi(x,y,z))$ be the input formula, and $S_y =  \{gt_1, \ldots, gt_n\}$ be a set of sufficient ground
terms for the variable $y$. Suppose that the sets of sufficient ground terms of $x$ and $z$ are infinite. In this case, instantiating and eliminating $y$ will result in the formula 
\begin{gather*}\label{eq:examp1:form2}
  \forall x \mid (\psi(x) \foor \forall z \mid (\varphi(x,gt_1,z)
  \foand \ldots \foand \varphi(x,gt_n,z)))
\end{gather*}
which has a higher number of occurrences of the variables $x$ and $z$.   
\end{example}

\section{Proofs}
\label{a:sect:proofs}

%
\begin{repcorollary}{col:1}
If $\n{vGT}(x)_{\scs_A} \neq \infty$, then $\pi_x(v) \in M(\n{vGT}(x)_{\scs_A})$, for all $v \in |M|$.
\end{repcorollary}

\begin{proof}
The claim follows directly from the definition of $\pi_x$
\end{proof}

\begin{corollary}
\label{col:2}
For all $gt \in \n{Gr}(A)$, $M^{\pi_x}(gt) = M(gt)$.
\end{corollary}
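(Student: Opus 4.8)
The plan is to prove the claim by structural induction on the ground term $gt \in \n{Gr}(A)$. The guiding idea is that $\mc{M}^{\pi_x}$ is built to coincide with $\mc{M}$ everywhere except on uninterpreted function symbols, and there the only difference is the insertion of the argument projections $\pi_x(f,i)$; so the whole argument reduces to checking that these projections act as the identity on the interpretations of ground subterms of $A$.

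First I would treat the base case, where $gt$ is a constant $c$. By the definition of the modified model, $M^{\pi_x}(c) := M(c)$, so the equality is immediate. For the inductive step, write $gt = f(gt_1,\ldots,gt_n)$; since every argument of a ground subterm of $A$ is again a ground subterm of $A$, each $gt_j \in \n{Gr}(A)$, and the induction hypothesis gives $M^{\pi_x}(gt_j) = M(gt_j)$. If $f$ is an interpreted operator, then $M^{\pi_x}(f) = M(f)$, and the claim follows directly by combining this with the induction hypothesis. The only case with content is $f$ uninterpreted, where the definition of $\mc{M}^{\pi_x}$ yields
\[
M^{\pi_x}(gt) = M(f)\bigl(\pi_x(f,1)(M(gt_1)),\ldots,\pi_x(f,n)(M(gt_n))\bigr),
\]
so it suffices to prove $\pi_x(f,j)(M(gt_j)) = M(gt_j)$ for each $j$.

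The key step is to observe that the occurrence of $f(gt_1,\ldots,gt_n)$ inside $A$ is exactly what triggers rule~$R_2$, which places each ground argument $gt_j$ in $\fgt{f}{j}$. Consequently $M(gt_j) \in M(\fgt{f}{j})$, and inspecting Equation~\ref{eq:pi-fi} shows that $\pi_x(f,j)$ returns its argument unchanged in this situation: either its first branch applies (when $\fgt{f}{j} \ndotsubset \vgts{x}$), or, failing that, its second branch applies, since that branch fires precisely when the argument lies in $M(\fgt{f}{j})$. Either way $\pi_x(f,j)(M(gt_j)) = M(gt_j)$, which closes the induction.

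I do not expect a genuine obstacle here; the statement is a routine structural induction. The one point that needs care is the justification that every ground term appearing as the $j$-th argument of an uninterpreted function occurs in $\fgt{f}{j}$, which is precisely the role of rule~$R_2$. This is the fact that makes the projections behave as the identity on ground subterms, and it is the reason $\mc{M}^{\pi_x}$ preserves the meaning of all ground terms already present in $A$.
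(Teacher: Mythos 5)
Your proof is correct and follows essentially the same route as the paper's: structural induction on $gt$, with the base and interpreted cases discharged by the definition of $\mc{M}^{\pi_x}$, and the uninterpreted case settled by invoking rule $R_2$ to place each ground argument in $\fgt{f}{j}$ so that $\pi_x(f,j)$ acts as the identity. The only cosmetic difference is that the paper treats a unary $f$ ``without loss of generality'' while you handle the $n$-ary case explicitly and also spell out both branches of Equation~\ref{eq:pi-fi}, which is if anything slightly more careful.
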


\begin{proof}
By induction over the structure of $gt$. If $gt \in \n{Const}$, the claim follows directly from the definition of $M^{\pi_x}$. 
If, without loss of generality, $gt := f(t)$, where $f \in \n{Fun}$ and $t \in Gr$, we get by the induction hypothesis, $M^{\pi_x}(f(t)) = M^{\pi_x}(f)(M^{\pi_x}(t)) {\buildrel\rm i.h. \over=} M^{\pi_x}(f)(M(t))$. 
Now we have to distinguish between interpreted and uninterpreted functions. If $f$ is interpreted, the claim follows directly from the definition of $M^{\pi_x}$. 
If $f$ is uninterpreted, we get $M^{\pi_x}(f)(M(t)) = M(f)(\pi_x(f,1)(M(t)))$. Furthermore, we know, because of rule $R_2$ and $gt \in \n{Gr}(A)$, that $t \in \n{fGT}(f,1)_\sa$. 
Now we can use the definition of $\pi_x(f,1)$ and we get $\pi_x(f,1)(M(t)) = M(t)$.
\end{proof}

For a variable assignment $\beta$, a value $v \in |M|$ and a variable $x \in Var$, we use the notation $\beta_{x}^{v}$ to denote the modification of $\beta$ where $x$ is mapped to $v$.

%
\begin{replemma}{lemma:1}
Let $x$ be a variable with $\vgts{x} \neq \infty$, $\mc{M}$ a model, $\beta$ a variable assignment, and 
$\beta' = \lambda y.\ \n{if}\ \vgts{y} \dotsubset \vgts{x} \ \n{then}\ \pi_y(\beta(y))\ \n{else}\ \beta(y)$.
Then  
$(M, \beta') \models l$ implies $(M^{\pi_x}, \beta) \models l$
for all interpreted literals $l$ in $A$.
\end{replemma}

\begin{proof}
Because of the rules $R_4$ and $R_6$, without loss of generality, we can restrict $l$ to $l := op(x, gt_0)$ where $op \in \{=, <, \leq, >, \geq\}$ and $\beta'$ to $\beta'= \lambda y. \ \beta_{x}^{\pi_x(\beta(x))}(y)$. 
Let us now assume that $(M, \beta_{x}^{\pi_x(\beta(x))}) \models l$ and $(M^{\pi_x}, \beta) \not\models l$. 
For $op := \verb+"+<\verb+"+$, we get from rule $R_5$, $gt_0 \in \n{vGT(x)}_{\scs_A}$ and from the assumptions the inequality system $(\beta(x) \ge gt_0) \foand (\pi_x(\beta(x)) < gt_0)$, 
which implies that $|\beta(x) - \pi_x(\beta(x))|$ is not minimal, since $|\beta(x) - gt_0|$ is strictly smaller. 
For $op \in \{\leq, >, \geq\}$, the proof is similar to the previous case. For $op := \verb+"+=\verb+"+$, we get from rule $R_{13}$, $\{gt_0-1, gt_0+1\} \subseteq \n{vGT(x)}_{\scs_A}$ 
and from the assumptions, the inequality system $(\beta(x) \not = gt_0) \foand (\pi_x(\beta(x)) = gt_0)$, which is equivalent to $(\beta(x) \leq gt_0-1) \foor (gt_0+1 \leq \beta(x))) \foand (\pi_x(\beta(x)) = gt_0)$ 
and implies that $|\beta(x) - gt_0|$ is not minimal, since in the case $(\beta(x) \leq gt_0-1)$, $|\beta(x) - (gt_0-1)|$ is strictly smaller and in the case $(gt_0+1 \leq \beta(x))$, $|\beta(x) - (gt_0+1)|$ is strictly smaller.  
\end{proof}

Proposition \ref{proposition:1} provides a stronger result compared to lemma \ref{lemma:1}. It better reflects the intuition behind the rules $R_5$, $R_7$ to $R_{13}$. 
They guarantee that if a variable $x$ occurs as an argument of an interpreted operator, then there is at least one $gt_l \in \vgts{x}$ with $\not \models_{\mathcal{T}} l[gt_l/x]$. 
That is, $C[gt_l/x]$ is either valid or its satisfiability is determined by literals other than $l$. 
We proved lemma \ref{lemma:1} because it is sufficient for our main theorem, and it has a shorter proof.

\begin{proposition}
\label{proposition:1}
Let $C$ be a clause in $A$, $x$ a variable in $C$ with $\n{vGT(x)}_{\scs_A} \neq \infty$, and $M$ a model  of $C[\n{vGT(x)}_{\scs_A}/x]$, then either there exists an uninterpreted literal $l \in C$, where $M \models l[gt/x]$ for some $gt \in \n{vGT(x)}_{\scs_A}$, or there exists a (tautology) subclause $C'$ of $C$ whose literals are interpreted and $\models_{\mathcal{T}} C'$.
\end{proposition}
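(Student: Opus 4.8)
The plan is to argue the contrapositive of the first disjunct: assume that no uninterpreted literal of $C$ is satisfied by $M$ under any instantiation drawn from $\vgts{x}$, and from this construct the required $\mc{T}$-valid interpreted subclause $C'$. First I would classify the literals of $C$ into three groups: the uninterpreted literals $U$, the interpreted literals $I_x$ in which $x$ occurs, and the remaining interpreted literals. By rules $R_4$ and $R_5$ (which would otherwise have forced $\vgts{x} = \infty$, contradicting the hypothesis) every literal in $I_x$ contains $x$ as its only variable and has the shape $x \bowtie gt$ or $\neg(x \bowtie gt)$ with $\bowtie \in \{=,<,\leq,>,\geq\}$; hence after substituting a ground term for $x$ each becomes a ground integer literal whose truth in $\mc{T}$ is fixed.

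The key ingredient I would isolate next is the witness property of rules $R_6$--$R_{13}$: for every literal $l \in I_x$ the rules place into $\vgts{x}$ a boundary ground term $w_l$ with $l[w_l/x]$ false in $\mc{T}$. I would tabulate this once (for instance $x \leq gt$ contributes $gt+1$, $x \geq gt$ contributes $gt-1$, $x = gt$ contributes both $gt-1$ and $gt+1$, and the negated comparisons contribute $gt$ via $R_8$, $R_{11}$, $R_{12}$), so that the boundary of each comparison is probed by some element of $\vgts{x}$.

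The heart of the argument is an interval-covering lemma over $\mathbb{Z}$. Each $l \in I_x$ defines a true-region that is a lower half-line, an upper half-line, a singleton, or a co-singleton. I would show that if the union of these true-regions is all of $\mathbb{Z}$, then $\bigvee_{l \in I_x} l$ is $\mc{T}$-valid and I may take $C' = I_x$. Otherwise I would locate the least uncovered integer $n$ and prove that $n$ must be one of the witnesses $w_l$: if $n$ sits at the left edge of the gap it is the witness of the extremal lower half-line, while if $n$ lies strictly inside the gap its left neighbour $n-1$ can be covered only by a singleton, whose $R_{13}$-witness is exactly $n$. Consequently $n \in \vgts{x}$, and at $x := n$ every literal of $I_x$ is false.

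Finally I would combine this with the standing assumption. Since $M \models C[n/x]$ yet all of $I_x$ is false at $n$ and, by assumption, no uninterpreted literal is ever satisfied, the clause can be satisfied only through an interpreted literal not containing $x$; when such a literal is ground this already yields a $\mc{T}$-valid singleton subclause $C'$. I expect the main obstacle to be the remaining multi-variable situations: the universal reading of $M \models l[gt/x]$ does not immediately permit falsifying several non-$x$ interpreted literals simultaneously, so making the case analysis airtight when $C$ contains variables besides $x$ will require carefully exploiting that $M \models C[n/x]$ holds under every assignment to those variables. This, together with the bookkeeping of the interval-covering lemma (in particular accounting for singletons that fill gaps between the half-lines), is where the real work lies.
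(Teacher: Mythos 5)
First, a point of reference: the paper never proves Proposition \ref{proposition:1}. It is stated as a remark, and the authors explicitly say they proved Lemma \ref{lemma:1} instead ``because it is sufficient for our main theorem, and it has a shorter proof.'' So your attempt cannot be measured against a reference proof; it must stand on its own, and it does not, because of a recurring conflation of ``true in the given model $M$'' with ``$\mc{T}$-valid.'' The flaw sits at your central step, the inference from ``the true-regions of $I_x$ cover $\mathbb{Z}$'' to ``$\bigvee_{l \in I_x} l$ is $\mc{T}$-valid.'' The true-regions are determined by the \emph{values} $M(gt)$ of the ground terms in the literals, and these are model-dependent ($gt$ ranges over ground terms built from uninterpreted constants and functions, not over numerals). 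A covering of $\mathbb{Z}$ therefore shows only that $\bigvee_{l \in I_x} l$ holds in the particular model $M$ under every assignment to $x$, not that it holds in every $\mc{T}$-model. Concretely, for $C = (x \leq a) \foor (x \geq b)$ the rules give $\vgts{x} = \{a+1,\, b-1\}$, and the $\mc{T}$-model $M$ with $M(a)=0$, $M(b)=1$ satisfies $C[\vgts{x}/x]$; the true-regions $(-\infty,0]$ and $[1,\infty)$ cover $\mathbb{Z}$, yet $(x \leq a) \foor (x \geq b)$ fails in a model with $a=0$, $b=2$ at $x=1$, so no subclause of $C$ is $\mc{T}$-valid. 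The same conflation reappears in your final step, where a ground interpreted literal such as $b-1 \leq a$ that merely happens to hold in $M$ is promoted to a ``$\mc{T}$-valid singleton subclause.'' Note that this example does more than break your argument: since $C$ has no uninterpreted literal at all, it refutes the proposition under the literal reading of $\models_{\mc{T}}$; only a model-relative weakening of the second disjunct (``$C'$ holds in $M$ under all assignments'') can be true, and that weaker claim is what your covering lemma --- whose witness-chasing core is correct and worth keeping --- actually establishes.

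The multi-variable difficulty you flag at the end is also not mere bookkeeping; it defeats the case analysis even for the weakened statement. Take $C = p(y) \foor (x \leq a) \foor \fonot(y \leq b)$, for which the rules give $\vgts{x} = \{a+1\}$ and $\vgts{y} = \{b\}$. The $\mc{T}$-model $M$ over $\mathbb{Z}$ with $M(a)=0$, $M(b)=5$, and $M(p) = \{v \mid v \leq 5\}$ satisfies $C[\vgts{x}/x]$. Instantiating $x$ with the witness $a+1$ falsifies $x \leq a$ as intended, but what remains true is only the \emph{disjunction} $p(y) \foor \fonot(y \leq b)$: neither the uninterpreted literal $p(y)$ nor the interpreted literal $\fonot(y \leq b)$ holds in $M$ under all assignments to $y$. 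So neither disjunct of the proposition's conclusion can be extracted pointwise, as your plan requires; under the paper's convention that free variables are implicitly universally quantified, this is in fact a counterexample to the proposition itself. Any correct proof of a corrected statement has to quantify the claim per assignment of the remaining variables and handle this trade-off between uninterpreted and interpreted literals over those variables --- exactly the part your proposal leaves open.
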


In the following, we use \emph{expressions} to refer to both terms and formulas. That is, $Expr = Term \cup For$.

%
\begin{replemma}{lemma:3}
Let $x$ be a variable with $\vgts{x} \neq \infty$, $\mc{M}$ a model, $\beta$ a variable assignment, and 
$\beta' = \lambda y.\ \n{if}\ \vgts{y} \dotsubset \vgts{x} \ \n{then}\ \pi_y(\beta(y))\ \n{else}\ \beta(y)$.
Then 
$
(M,\beta')(l)
= 
(M^{\pi_x}, \beta)(l) 
$ for all uninterpreted literals $l$ in $A$.
\end{replemma}

\begin{proof}
  To prove the claim, we show the statement $(M,\beta')(l) =
  (M^{\pi_x}, \beta)(l)$ for all expressions but variables $l \in
  \n{Expr \setminus Var}$ occurring in $A$ using structural induction.


  If $l$ is a ground term in $A$, then the claim follows directly from corollary \ref{col:2}.
  
  Let $l = f(t_{1:n})$ be a function application in $A$ with $f$ an uninterpreted function. 
  The evaluations of $l$ are
  \begin{align*}
	  (M^{\pi_x},\beta)(f(t_{1:n})) &= M^{\pi_x}(f)((M^{\pi_x}, \beta)(t_1), \dots, (M^{\pi_x}, \beta)(t_n))\\
    \quad &= M(f)(\pi_x(f,1)((M^{\pi_x}, \beta)(t_1)), \dots, \pi_x(f,n)(t_n))\\
    (M, \beta')(f(t_{1:n}) &= M(f)((M,\beta')(t_1), \dots, (M,\beta')(t_n))
  \end{align*}
  It suffices to show that 
  $\pi_x(f,i)((M^{\pi_x}, \beta)(t_i)) = 
  (M, \beta')(t_i)$ for $1 \leq i \leq n$.
  We do this by a case distinction over the type of the terms $t_i$.
  
  If $t_i = y$ is a variable with $\vgts{y} \ndotsubset \vgts{x}$, then $\beta'(y) = \beta(y)$. 
  Because of rule $R_1$ we additionally get $\fgt{f}{i} \ndotsubset \vgts{x}$, which implies that $\pi_x(f,i)$ is the identity.

  If $t_i = y$ is a variable with $\vgts{y} \dotsubset \vgts{x}$, then $\beta'(y) = \pi_y(\beta(y))$.
  Because of rule $R_1$ we get $\vgts{y} = \fgt{f}{i} \dotsubset \vgts{x}$, which implies that $\pi_x(f,i)= \pi_y$.

  If $t_i$ is a function application, we assume $t_i = s[x_{1:m}]$ for some term $s$. 
  By induction hypothesis,
  $\pi_x(f,i)((M^{\pi_x}, \beta)(s[x_{1:m}])) \eqih$ 
  $\pi_x(f,i)((M, \beta')(s[x_{1:m}]))$. 
  W.r.t. $\n{fGT}(f,i)_\sa$, there is two possible cases to consider:\\
  1) $\fgt{f}{i} \ndotsubset \vgts{x}$, then $\pi_x(f,i)$ is the identity and the claim follows directly. \\
  2) $\fgt{f}{i} \dotsubset \vgts{x}$, then because of rule $R_3$ $\vgts{x_i} \dotsubset \fgt{f}{i} \dotsubset \vgts{x}$, for all $1 \leq i \leq m$.
  This implies that $\beta'(x_i) = \pi_{x_i}(\beta(x_i))$ for all $1 \leq i \leq m$.
  Using this fact together with corollary \ref{col:1}, there exists for each $x_i$ a ground term $gt_i$, with $\pi_{x_i}(\beta(x_i)) = M(gt_i)$ and $gt_i \in \vgts{x_i}$.
  So we can write, $\pi_x(f,i)((M, \beta')(s[x_{1:m}])) = \pi_x(f,i)(M(s[gt_{1:m}]))$.
  Because of rule $R_3$ we know that $s[gt_{1:m}] \in \fgt{f}{i}$, and so $M(s[gt_{1:m}]) \in M(\fgt{f}{i})$. 
  Finally the claim follows from the definition of $\pi_x(f,i)$ for values in $M(\n{fGT}(f,i)_\sa)$ and the assumption that $\fgt{f}{i} \dotsubset \vgts{x}$.

  Let $l := f(t_{1:n})$ be an expression with $f$ an interpreted function. 
  Using the definition of $M^{\pi_x}$ for interpreted functions,
  $(M^{\pi_x}, \beta)(f(t_{1:n})) =$ 
  $M(f)((M^{\pi_x},\beta)(t_1), \dots, (M^{\pi_x},\beta)(t_n))$. 
  Since $l$ is uninterpreted, all $t_i$s are non-variables and we can use the induction hypotheses on them and get,
  $M(f)((M^{\pi_x},\beta)(t_1), \dots, (M^{\pi_x},\beta)(t_n)) =$
  $M(f)((M,\beta')(t_1), \dots, (M,\beta')(t_n))$.
  Now the claim follows directly from the definition of $M$.

\end{proof}

\end{document}